\newtheorem{theorem}{Theorem}
\newcommand{\matter}{\mathcal{M}}
\newcommand{\spa}{\mathcal{S}}
\newcommand{\conf}{\mathcal{C}}
\newcommand{\deform}{\mathcal{D}}
\journal{Journal of Geometry and Physics}
\begin{document}
\title{The Principle Bundle Structure\\ of \\ Continuum Mechanics}

\author{Stefano Stramigioli, \\ University of Twente, The Netherlands}

\begin{abstract}
In this paper it is shown that the structure of the configuration space of any continua is what is called in differential geometry a {\it principle bundle} \cite{Frankel2011ThePhysics}. A principal bundle is a structure in which all points of the manifold (each configuration in this case) can be naturally projected to a manifold called the {\it base manifold}, which in our case represents pure deformations. All configurations projecting to the same point on the base manifold (same deformation) are called fibers. Each of these fibers is then isomorphic to the Lie group $\mathfrak{se(3)}$ representing pure rigid body motions.  Furthermore, it is possible to define what is called a connection and  this allows to split any continua motion in a rigid body sub-motion and a deformable one in a completely coordinate free way. As a consequence of that it is then possible to properly define a pure deformation space on which an elastic energy can be defined. This will be shown using screw theory \cite{Ball:1900}, which is vastly used in the analysis of rigid body mechanisms but is not normally used to analyse continua. Beside the just mentioned result, screw theory will also be used to relate concepts like helicity and enstrophy to screw theory concepts.
\end{abstract}

\maketitle

\section{Introduction}
The study of continuous matter is of fundamental importance in all branches of engineering mechanics \cite{book:abrahammarsden}, spanning from solid deformations \cite{MarsdenHughes1983} to fluids \cite{Chorin1993AMechanics}.
A lot of interest is present also about proper coordinate free analysis descriptions of continua, like for example the excellent recent work of \cite{Objrates2021}.

On the other hand in rigid body mechanics, the {\it theory of screws}, \cite{Ball:1900} is vastly used to describe the motion of mechanisms of interconnected rigid bodies. A nice example of its use can for example be found in the analysis of dynamic balance by \cite {deJong2019ABalance}. The theory of screws is based on two theorems: Mozzi's theorem, for which an hystorical perspective can be found in 
\cite{Ceccarelli2000}\footnote{In the literature of screw theory Chasles theory (1830) is often cited but Mozzi's presented the same result in 1763.} and Poinsot's theorem. Mozzi's theorem states that any instantaneous rigid body motion can be interpreted as the action of a rotation around an axis in space superimposed to a possible motion along the same axis, proportional to a scalar called the {\it pitch} which indicates the length traveled along the line for a full rotation along the line, like the pitch of a ``workshop'' screw. Poinsot's theorem is a dual theorem which basically says that any systems of forces applied to a rigid body, would have the same effect of a single force applied along a line in space and superimposed to a possible torque around the same line, also relating the two by a scalar pitch. These two theorems where then used in \cite{Ball:1900} to create a theory based on the geometry of lines, or better screws, for the analysis of rigid body kinematics. An extensive nice treatment of the mathematics and its relation to Lie groups and Clifford's algebras can be found in  \cite{book:selig}, and specifically to Lie algebras in \cite{JournalBall2002}. 
In this work it will be shown that it is possible to associate to each velocity and vorticity in a certain point of a continua, an infinitesimal (associated to the infinitesimal volume at the point) screw describing them at the same time. Such an infinitesimal screw (relating to the infinitesimal volume element with velocity $v$ and vorticity $dv$), as in the rigid body case, will be characterised by an axis and a pitch, but in this case the motion of the point will be considered acted upon by the infinitesimal screw. It will be shown that the concepts of helicity and enstrophy, well known in fluid dynamics, will get a very geometric interpretation in the context of infinitesimal screws.

Considering that each of these screws are equivalent to elements of a Lie algebra \cite{JournalBall2002}, they can be summed or integrated. Such an operation will reveal the superimposed motion that each point will generate: if all motions would follow the same rigid body motion, the resultant would be such a motion, but in case of non rigid motions, the resultant will show the ``main rigid body motion'' of the continuum. This construction will be presented precisely and it will be shown that, thanks to this, it will be possible to uniquely decompose, in a completely coordinate free way, the rigid component of a continuum motion and the resting pure deformation. This will be done by the introduction of what is called a differential geometric connection  \cite{Frankel2011ThePhysics} in the principal bundle of motions. 

During the paper, to stress the coordinate invariance of the approach, different terms will be used as synonymous like geometric or intrinsic.

The paper will start in Sec.\ref{sec:setting} with the general framework formulation of the analysis of a continua. In Sec.\ref{sec:princbundle} it will be shown that the configuration space of a continuum has indeed a principle bundle structure whose fibers are isomorphic to $SE(3)$. In Sec. \ref{sec:screws} the geometry and basics of screws theory will be reviewed to then in Sec.\ref{sec:infscrews} use this insight to define the connection on the principle bundle introduced in Sec.\ref{sec:princbundle}. Finally, in Sec.\ref{sec:implications} the results will be framed in the context of a geometric description of potential energy of the deformation, only using an energy formulation to then draw some concluding remarks in Sec.\ref{sec:conclusions}.

\paragraph*{Notation} 

A compact, orientable, $n$-dimensional Riemannian manifold $M$ with (possibly empty) boundary $\partial M$ models the spatial container of a non relativistic continuum mechanical system. It possesses a metric field $g$ and induced Levi-Civita connection $\nabla$ . The space of vector fields on $M$ 
is defined as
the space of sections of the tangent bundle $TM$, that will be denoted by $\Gamma({TM})$. The space of differential $p$-forms is denoted by $\Omega^p(M)$ and we also refer to $0$-forms as functions, $1$-forms as co-vector fields and $n$-forms as top-forms. For any
$v\in \Gamma(TM)$, we use the standard definitions for the interior product by $\iota_v:\Omega^p(M) \to \Omega^{p-1}(M)$ and the Lie derivative operator $\mathcal{L}_v$ acting on tensor fields of any valence. 
The Hodge star operator
$\star:\Omega^p(M) \to \Omega^{n-p}(M)$, the volume form $\mu=\star 1$, as well as the "musical" operators $\flat:\Gamma(TM) \to \Omega^{1}(M)$ and $\#:\Omega^{1}(M) \to \Gamma(TM)$, which respectively transform vector fields to 1-forms and vice versa, are all uniquely induced by the Riemannian metric in the standard way. 
 To complement a purely coordinate-free notation, we will represent tensorial quantities of interest in local coordinates, denoted by $x$, when considered insightful, using Einstein summation convention.

\section{The Continuum Mechanics Setting \label{sec:setting}}
The goal of this paper is to show the intrinsic structures which appear in the description of the kinematics of a continua. For this reason, there will be no use of coordinates. The setting which will be used, using the approach of \cite{Noll1978} is the one of a continua moving in space and the various entities of the problem will be introduced next.
\subsection{Matter and Space}
We consider the motion of {\it matter} in {\it space} where matter is mathematically modeled as a manifold $\matter$ of dimension $n$ to which some properties may be associated via sections of a proper bundle. For example, in the case of a purely mechanical description, we can consider $\mu_m \in \Omega^n(\matter)$ as the mass top form describing the mass distribution, where we have indicated with $\Omega^n(\matter)$ the sections  of the $n$ alternating bundle to which  $n$-forms on $\matter$ belong. Any other property could be defined in the same way. What is important to realise is that such properties are all those properties which are conserved for physical reasons and will be advected by the motion because strictly connected to the matter and matter only, independently of any motion it takes in the space. This could be for example also electrical charge. On the other hand the scalar density $\rho$ is a quantity which is not conserved/advected because its definition does not depend only on the matter, but on the relation between the mass topform $\mu_m$ and a representation of the volume form of the space.

We consider {\it space} as an $n$ dimensional  Riemannian manifold $(\spa, g)$ where with $g \in \text{sym}_+(T_2^0 \spa)$ we indicate the positive definite symmetric metric field on $\spa$ with induced  volume form $\mu \in \Omega^n(\spa)$. 

\subsection{The configuration space}
We can now put {\it matter} into  {\it space} with an embedding of the form:
\[
e:\matter \rightarrow \spa
\]
\noindent which will describe in a certain moment where each of the points of matter will be situated in space. Each of such objects will define therefore a configuration of the body in space and therefore the set  $\conf$ of all proper embedding of $\matter$ into $\spa$ will describe the configuration space. It is possible to see that, inheriting the differential structures of $\matter$ and $\spa$, $\conf$ is an infinite dimensional differentiable manifold. From now on we will consider furthermore the elements of $\conf$ to be bijections to simplify the exposition of what follows. This is always possible if we generalise the concept of matter to also vacuum matter representing those parts of space where no mass is present but a more precise description can be used \cite{Noll1978,Objrates2021}.

\subsection{Motions}
We can now define a motion as a smooth curve in $\conf$ of the type
\[
c: I \rightarrow \conf \; : t \mapsto c_t 
\]
\noindent which associates to a real time value $t$ in the interval $I$ the configuration $c(t)$ at that instant. We can then see that geometrically $\dot{c}(t):=\frac{d}{ds} c(s)|_{s=t} \in T_c \conf$, and this allows to define the following map:
\[
E_c: T_c \conf \rightarrow \Gamma(T \spa) ; v \mapsto (x \mapsto (x, (v \circ c^{-1})(x))
\]
\noindent which associates to a configuration $c$ and its velocity $\dot{c}$ a complete vector field in the space $c(\matter)$. Due to the hypothesis of the bijective nature of elements in $\conf$ we can than also take the pullback $c^*$ of a vector field and we have the complete picture relating the Lagrangian (configuration) velocity to the Eulerian vectorfield (space) and the Convective vector field (matter):
\begin{center}
\begin{tikzcd}
\underbrace{\Gamma(T \matter)}_{\text{convective}} &  \arrow [l, "(c^* \circ E_c)"'] \underbrace{T_c \conf}_{\text{Lagrangian}} \arrow[r,"E_c"] &  \underbrace{\Gamma(T \spa)}_{\text{Eulerian}}
\end{tikzcd}
\end{center}

\section{The Principle Bundle structure of continuum motion \label{sec:princbundle}}
We can now consider the manifold $\text{Diff}(\spa)$ of diffeomorphic auto-morphisms of $\spa$  and considering that $\spa$ has a metric, we can consider the isometries within $\text{Diff}(\spa)$ which will be indicated as $G \subset \text{Diff}(\spa)$. It is possible to see that $G$ is a Lie group and in the case that $(\spa,g)$ is flat and therefore Eucledian, $G$ would correspond to the Special Eucledian group $SE(n)$. Keeping it general, we can consider $G$ as the group of isometric right actions on $\spa$:
\[
\lhd: \spa \times G \rightarrow \spa : (p,h) \mapsto p \lhd h
\]
\noindent with the corresponding infinitesimal generator of the corresponding Lie algebra $\mathfrak{g}$ for which $
\forall T \in \mathfrak{g}$ we can define:
\[
V_T:\spa  \rightarrow T\spa : p, \mapsto \frac{d}{ds}( p \lhd e^{sT})|_{s=0}
\]
\noindent for which the isometric property should be satisfied $\forall T \in \mathfrak{g}$:
\[
\mathcal{L}_{V_T} g=0.
\]
\noindent We can also define an equivalent relation in $\conf$. Considering $c_1,c_2 \in \conf $:
\[
c_1 \sim c_2 
\Leftrightarrow \exists h \in G \; \text{s.t.} \; c_1(\matter)=c_2(\matter) \lhd h
\]
\noindent  and it is straight forward to see that such an equivalence relation satisfies the needed properties for an equivalence relation. We can then define the {\it pure deformation space} as $\deform:=\conf/\sim$ which can be recognised as the space of orbits for the group $G$. 
This shows that $\conf$ has the following fiber bundle structure:
\[
\pi:\conf \rightarrow \deform : c \rightarrow [c]
\]
\noindent where $[c]$ indicates the equivalence class in $\deform$ 
to which $c$ belongs. It is possible to finally check that the right action $\lhd$ is free and therefore $\conf \rightarrow \deform$ is a principle bundle on which $G$ acts.


By the previous construction it is possible to see if a motion expressed by an element $v_c(c) \in T_c \conf$ is a pure rigid body motion by observing if it lies in the vertical space $\text{Ver}(c):=\text{Ker}(\pi_*(c))$, the kernel of $\pi_*$ in $c$, which is isomorphic to $\mathfrak{g}$. On the other hand, without a horizontal space $\text{Hor}(c)$ complementary to $\text{Ver}(c)$ such that we have $T_c \conf = \text{Ver}(c) \oplus \text{Hor}(c) \; \forall c \in \conf$, it is not possible to uniquely decompose any general vector in a vertical and horizontal component . For a principal bundle, the definition of a horizontal differential distribution $\text{Hor}(c)$ is done with a connection which is a Lie algebra valued (in our case $\mathfrak{g}$) one form. As it will be shown in this paper it is possible to define such a connection in a very precise way, but before doing so, we will need to restrict to the case of the 3 dimensional space and to the Lie group of positive isometries $SE(3)$, the Special Eucledian group with corresponding Lie algebra $\mathfrak{se(3)}$.

\section{The geometry of $\mathfrak{se(3)}$ and screws \label{sec:screws}}
Without the intention to be exhaustive about a theory which is well known and documented \cite{Ball:1900, book:selig, StramigioliTutorial2001}, a short introduction of screw theory will now be given.

In the modeling of rigid body motions, the role of $SE(3)$ is fundamental because it is isomorphic to the configuration space of a rigid body motions as explained in the previous section. Infinitesimal rigid body motions (generators) can then be represented using elements of the corresponding Lie algebra $\mathfrak{se(3)}$ which are in mechanics called {\it instantaneous  twists}. In screw-theory \cite{Ball:1900} the kinematics of rigid-body mechanisms is described using the geometry of lines and screws in the  Euclidian space. The latter are geometric lines with the extra information of an orientation of the line, a scalar magnitude and an extra scalar quantity $\lambda$ called the pitch of the screw. This is based on Mozzi's theorem which basically says that any rigid body instantaneous motion can be expressed as a rotation around an axis in the Euclidean space, plus a translation along the same axis.
The magnitude of the screw will then indicate the speed of rotation around this axis.
The relation between the rotation speed and the translation speed is given by what is called the {\it pitch} as it happens in a mechanical screw. As shown in \cite{JournalBall2002, book:selig}, the description of rigid body motions using Lie groups and screws are equivalent. This means that we can properly interpret an abstract element of the Lie algebra $\mathfrak{se(3)}$ as a screw in the Eucledian space acted upon $SE(3)$. Due to this fact, it is geometrically meaningful to take linear combinations of screws using the vector structure of $\mathfrak{se(3)}$ and this will be one of the core insights used: after introducing the concepts of screws, we will see that it is possible to associate an infinitesimal screw to any motion of the continua and then sum/integrate in the continua.
Furthermore, being $\mathfrak{se(3)}$ not only a vector space, but also a Lie algebra, it possesses a skew symmetric operation which corresponds in screw theory to what is called the cross product of screws.

One of the advantages of the screw interpretation is that the infinitesimal motion can be associated to a geometrical object living in the same space where the motion is happening, which is not the case with the more abstract element of a Lie algebra which belongs to a different space. The screw construction is possible by first individualising the straight line which is kept invariant by the infinitesimal motion under consideration, as expressed by Mozzi's theorem, and then identifying the other scalar information which defines the screw, the pitch $\lambda$. 
\subsection{Screw representation}
We can uniquely identify a screw with one  vector $\omega \in T_p \spa$, representing rotation, and a vector $v \in T_p \spa$ \cite{StramigioliTutorial2001}. Properly speaking this geometry uses what are called line bounded vectors for the geometry of lines but this distinction is not needed for our purpose. We can then define what is called a twist as:
\begin{equation}
T:=
\begin{pmatrix}
\omega \\ v
\end{pmatrix} \label{eq:screw}
\end{equation}
\noindent As said, it can also be seen that this "double vector" composed by two three dimensional vectors is a proper representation of an element of $\mathfrak{se(3)}$ as it will be constructed hereafter, and therefore we can sum such elements \cite{Stramigioli2001Book}. In order to see this, the  first needed step is to decompose $v$ along $\omega$ ($v_\omega$) and a part belonging to the orthogonal complement ($v_\omega^\perp$) and this can be done using the inner product $g$ in the case in which $\omega \ne 0$:
\[
v=v_\omega + v_\omega^\perp.
\]
\noindent The case in which $\omega=0$ represents  the degenerate case of a ``pure translation'' which in this setting can be properly framed as a rotation with an axis at infinity using the concept of improper lines of projecting geometry \cite{phd:lipkin}.

\subsubsection{Moment of the line and screw axis}
In the non degenerate case, the information about $\omega$ and $v_{\omega}^\perp$ uniquely identifies the axis of the screw. This is done by identifying the unique axis in space for which a rotation around such an axis with speed $\omega$ would result in the velocity $v_\omega^\perp$. It is possible to see that the axis of the rotation is oriented along the $\omega$ direction and it passes through a point which is situated at 
\begin{equation}
r:=\frac{\omega \times v}{<\omega,\omega>}
\label{eq:ray}
\end{equation}
\noindent from the point $p$, the base point of the tangent space $T_p \spa$ where $\omega$ and $v$ live. The operator $\times$ in this cases indicates the vector product of vectors in 3D Euclidean space. 
It is clear that by construction, the component $v_\omega$ plays no role in the definition of $r$ because it is aligned with $\omega$ and therefore results in a vanishing vector product of parallel vectors.

\subsubsection{The Pitch of the screw}
The pitch of the screw is then defined as the unique positive scalar $\lambda$ for which:
\[
v_\omega=\lambda \omega
\]
\noindent holds. It is easy to see  that we can easily calculate $\lambda$ as:
\begin{equation}
\lambda=\frac{<\omega,v>}{<\omega,\omega>}
\label{eq:pitch}
\end{equation}
\noindent In this case it is possible to see that  $v_\omega^\perp$ plays no role because by construction it is perpendicular to $\omega$. This shows that $v_\omega$ is the component of velocity which is related to the translation along the axis of the screw and it is different than zero if and only if the pitch $\lambda$ is different than zero which is the case if and only if the motion is not a pure rotation.

It is possible to see that if the motion of the body is rigid, the previous analysis in each other point $q \ne p$ belonging to the rigid body, would result in $\bar{\omega},\bar{v} \in T_q \spa$ which would give the same axis in the Euclidean space, and the same $\lambda$. Ultimately this results in an expression of Mozzi's theorem:
\begin{equation}
    \begin{pmatrix}
     \omega \\ v
    \end{pmatrix}
    =
    \underbrace{
    \begin{pmatrix}
     \omega \\ r \times \omega
    \end{pmatrix}}_{\text{rotation}}
 +
 \begin{pmatrix}
     0 \\ \lambda \omega
    \end{pmatrix}
\end{equation}
\noindent which clearly gives an expression of the all motion as built up from $\omega$ indicating the direction and speed of the motion, $r$ indicating where the axis of the screw lies in Euclidian space, and the scalar $\lambda$. It is important to notice that $r \times \omega$ which is called the {\it moment of the line} depends on the geometrical position of the line. A point of the axis of the screw can be reached by moving along $r$ from $p$. This is a properly defined statement because of the flatness of the space and the existence of a metric which allows such a construction.

\subsubsection{Summing screws and the Adjoint representation of $SE(3)$ \label{sec:adjoint}}
In the previous construction, we have given a way to associate to a motion a screw which is identified by a line with a magnitude plus a pitch. In order to sum these objects, we could do it geometrically as it is done in screw theory by a geometric operation of lines, or use the fact that a screw is correspondent to an element of $\mathfrak{se(3)}$ and use its vector structure. For what will be explained later,  we want to be able to calculate an integral which is a generalisation of a sum and the second method using $\mathfrak{se(3)}$ is much more convenient. 

For the Lie group $SE(3)$ of positive isometries of the Euclidean space of which $\mathfrak{se(3)}$ is the algebra, we can consider what is called the Adjoint representation of the group. If we take $h,k \in SE(3)$, we can consider what is called the conjucation automorphism $\Psi_k(h):=k h k^{-1}$ and we can then define
\begin{equation}
    Ad_k:\mathfrak{se(3)} \rightarrow \mathfrak{se(3)}; T \mapsto (\Psi_k)_*T
\end{equation}
\noindent which basically describes how elements of $\mathfrak{se(3)}$ change by changing the observer by $k$. In our case this would mean how the screw changes in the Euclidean space by moving the observer by $k$ which more precisely would mean that if consider $q=k(p)$ where $p,q$ are points in the Euclidian space, $\omega_p, v_p \in T_p \spa$, representing a motion as a screw in $p$, $\omega_q, v_q \in T_q \spa$ representing the same screw (motion) in $q$, and $k \in SE(3)$, then we have that \cite{book:selig}:
\begin{equation}
\label{eq:adjoint}
   \begin{pmatrix}
    \omega_q \\ v_q
   \end{pmatrix}=
   Ad_k 
   \begin{pmatrix}
    \omega_p \\ v_p
   \end{pmatrix}
\end{equation}
\noindent Thanks to the $Ad_k$ map, it is now possible to meaningfully define sum of multiple screws by first transforming them to the same tangent space:
\begin{equation}
   \begin{pmatrix}
    \omega_q \\ v_q
   \end{pmatrix}
       =\sum_i
   Ad_{k_i} 
   \begin{pmatrix}
    \omega_{p_i} \\ v_{p_i}
   \end{pmatrix}
\end{equation}
\noindent where $p_i \in T_{p_i} \spa$ and $q=k_i(p_i)$. This will be the main construction which will be extended to integration in the next section.

\section{The Pure Deformation Connection \label{sec:infscrews}}
As explained in the previous section, the screw resulting by the previous analysis, will be the same if calculated in any moving point of a rigid body. In case of a general moving continua, this will clearly not be the case in general, but thanks to the fact that such screws, are elements of the Lie algebra $\mathfrak{se(3)}$, it is completely meaningful to sum/integrate their contributions in order to calculate the ``total screw''  of the volume which has been used for the integration. In case the motion would be rigid, this would correspond to the screw of a rigid body as explained, but in case that the motion will not be rigid, this will result in an expression of a $\mathfrak{se(3)}$ valued 1-form which paired with the vector field of motions, would result after integration in a screw. Such an object by construction is what is called a differential geometric connection which can then be used to define a horizontal space in the principal bundle. 

\subsection{Screw interpretation of flow, vorticity, entropy and helicity}
If we considered a rotating rigid disc with an angular velocity of $\omega$, it is easy to see that all points on the disc would have a tangential velocity $v$ equal to the angular velocity times their distance from the axis. Furthermore, the corresponding vorticity, which is expressed in vector calculus with the rotational $\nabla \times v$, would be constant on the all disk and equal to $2 \omega$ because, as shown in  \cite{Chorin1993AMechanics}, we have:
\[
\omega=\frac{1}{2} \nabla \times v
\]
\noindent Writing this expression using exterior calculus, we then have that
\[
\omega=\frac{1}{2} (dv^\flat)^\sharp
\]
\noindent which would mean that eq.(\ref{eq:adjoint}) would be satisfied $\forall p,q$ on the rigid body.

If for a general continum media, we consider a covariant representation of both the velocity field $v \in \Gamma(T \spa)$ as a one form field $v^\flat \in \Omega^1(\spa)$ obtained by lowering its contravariant index with the metric $g$, and then the vorticity as its  exterior derivative $d v^\flat \in \Omega^2(\spa)$, we can then transform the two-form $d v^\flat$ to the one-form $ \star d v^\flat$ using the Hodge star operator $\star$ based on the metric $g$ and finally rise the index using the metric $g$ which we indicate with the standard $\sharp$ notation.
We can then see with reference to (\ref{eq:screw}) that the infinitesimal (to be integrated on the volume) screw can then be defined as:
\begin{equation}
\delta T:=
\begin{pmatrix}
\frac{1}{2}(\star d v^{\flat})^\sharp\\
v
\end{pmatrix}.
\label{eq:inftwist}
\end{equation}

\noindent This is because, as just stated, you have a clear correspondence $\omega \leftrightarrow \frac{1}{2}(\star d v^\flat)^\sharp$ of the vorticity as the local angular velocity.

A number of new concepts can then be introduced, like the {\it enstropy} topform:
\begin{equation}
\label{eq:enstropy}
e:=
d v^{\flat} \wedge \star dv^\flat \in \Omega^3(\spa) 
\end{equation}
 which plays the role of measuring the norm of the vorticity once integrated on the volume,
 and the {\it helicity} topform:
\begin{equation}
\label{eq:helicity}
h:=v^{\flat} \wedge dv^\flat \in \Omega^3(\spa).
\end{equation}

\noindent With analogy to eq.(\ref{eq:ray}) we can then define the {\it ray one form}:
\begin{equation}
\label{ray}
r:=\frac{\iota_v dv^\flat}{\star e}
\in \Omega^1(\spa) 
\end{equation}
and with analogy to eq.(\ref{eq:pitch}) we can define the {\it pitch function}

\begin{equation}
\lambda:=\frac{\star h}{\star e}
\in \Omega^0(\spa) 
\label{eq:pitchfunction}
\end{equation}
\noindent As the enstropy is used as a normalisation factor corresponding to the $<\omega,\omega>$ in eq.(\ref{eq:ray}) and eq.(\ref{eq:pitch}), it is interesting to observe that the helicity is proportional to the defined pitch form which is exactly the intuition of the concept (and corresponding naming) of helicity: with no pitch, there would be no helical motion of the fluid and the helicity would vanish. 

\begin{theorem}[Screw Field Decomposition]
Given any continuum vector field $v$ with corresponding covariant representation $v^\flat$ and respective vorticity $d v^\flat$, the following decomposition holds:

\begin{equation}
    \begin{pmatrix}
     {\star dv^\flat} \\  \star v^\flat
    \end{pmatrix}
    =
    \begin{pmatrix}
            \star dv^\flat  \\   
               r \wedge \star dv ^\flat
    \end{pmatrix}
 +
 \begin{pmatrix}
     0 \\ \lambda  dv^\flat
    \end{pmatrix}
\end{equation}

\noindent 
\end{theorem}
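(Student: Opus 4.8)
The top row is a tautology ($\star dv^\flat=\star dv^\flat+0$), so the whole content is the bottom row,
\[
\star v^\flat \;=\; r\wedge\star dv^\flat \;+\; \lambda\, dv^\flat ,
\]
which I would prove by recognising it as the Hodge dual of Mozzi's pointwise decomposition of the infinitesimal screw $\delta T$ of eq.(\ref{eq:inftwist}). The first step is bookkeeping: abbreviate $\sigma:=\star dv^\flat\in\Omega^1(\spa)$ and $\omega:=\sigma^\sharp=(\star dv^\flat)^\sharp\in\Gamma(T\spa)$ (so the ``angular'' part is taken to be the full $\star dv^\flat$, consistently with the definitions of $e$, $h$, $r$ and $\lambda$ in eqns.(\ref{eq:enstropy})--(\ref{eq:pitchfunction}); the factor $\tfrac12$ of eq.(\ref{eq:inftwist}) is immaterial since $v=r^\sharp\times\omega+\lambda\,\omega$ is invariant under $\omega\mapsto c\,\omega$). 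Since $\star\star=\mathrm{id}$ on every degree of $\Omega^\bullet(\spa)$, one has $dv^\flat=\star\sigma$, and then from the defining identity $\alpha\wedge\star\beta=\langle\alpha,\beta\rangle\,\mu$ on one-forms I would read off $\star e=\langle\sigma,\sigma\rangle=\langle\omega,\omega\rangle$ and $\star h=\langle v^\flat,\sigma\rangle=\langle v,\omega\rangle$, so that eq.(\ref{eq:pitchfunction}) reproduces the scalar $\lambda=\langle v,\omega\rangle/\langle\omega,\omega\rangle$ of eq.(\ref{eq:pitch}); similarly, using $\iota_v\star\sigma=\star(\sigma\wedge v^\flat)$ together with $\star(\alpha\wedge\beta)=(\alpha^\sharp\times\beta^\sharp)^\flat$ for one-forms, the ray one-form of eq.(\ref{ray}) satisfies $r^\sharp=(\omega\times v)/\langle\omega,\omega\rangle$, matching eq.(\ref{eq:ray}).

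With this dictionary, applying the involution $\star$ to the target equation turns it into the one-form identity $v^\flat=\star(r\wedge\star dv^\flat)+\lambda\,\sigma$; here $\star(r\wedge\sigma)=(r^\sharp\times\omega)^\flat$ and $\lambda\,\sigma=(\lambda\,\omega)^\flat$, so the claim is equivalent to the pointwise vector equation $v=r^\sharp\times\omega+\lambda\,\omega$. Substituting $r^\sharp=(\omega\times v)/\langle\omega,\omega\rangle$ and $\lambda=\langle\omega,v\rangle/\langle\omega,\omega\rangle$, this is nothing but the double cross-product identity $(\omega\times v)\times\omega=\langle\omega,\omega\rangle\,v-\langle\omega,v\rangle\,\omega$, i.e. the elementary orthogonal splitting $v=v_\omega^\perp+v_\omega$ with axial part $v_\omega=\lambda\,\omega$ and ``moment'' part $v_\omega^\perp=r^\sharp\times\omega$ --- exactly Mozzi's theorem, already used in the excerpt. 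The two rows of the theorem are then precisely the Hodge duals of $\omega^\flat=\omega^\flat+0$ and $v^\flat=(v_\omega^\perp)^\flat+v_\omega^\flat$.

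I do not foresee a real obstacle; the one point requiring care is the degenerate locus $Z=\{p\in\spa:\,dv^\flat_p=0\}$, where $\star e=0$ so that $r$ and $\lambda$ in eqns.(\ref{ray})--(\ref{eq:pitchfunction}) are a priori undefined and a naive reading of the identity would falsely force $\star v^\flat=0$. The statement is to be understood on the open set $\spa\setminus Z$ and extended across $Z$ through the improper-line / projective completion already invoked for the ``pure translation'' case after Mozzi's theorem; I would state the theorem with that proviso. Everything else is the routine catalogue of three-dimensional vector-calculus identities in exterior form ($\star\star=\mathrm{id}$, $\alpha\wedge\star\beta=\langle\alpha,\beta\rangle\,\mu$, $\iota_v\star\alpha=\star(\alpha\wedge v^\flat)$, $(a\times b)^\flat=\star(a^\flat\wedge b^\flat)$), the only thing to keep fixed being the orientation convention pinning down the sign of the cross product.
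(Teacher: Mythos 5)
Your proposal is correct, but it reaches the identity by a genuinely different route than the paper. The paper stays entirely inside exterior calculus: it writes the tautology $\star e \, v^\flat = \alpha + \star h \,\star dv^\flat$ with $\alpha := \star e\, v^\flat - \star h \,\star dv^\flat$, and then collapses $\alpha$ to $\star\bigl(\iota_v dv^\flat \wedge \star dv^\flat\bigr)$ using the contraction identity $\iota_v\beta = \star(v^\flat\wedge\star\beta)$ and the Leibniz rule for $\iota_v$ on $dv^\flat\wedge\star dv^\flat$; dividing by $\star e$ and applying $\star$ gives the second row. You instead dualize the whole statement down to the pointwise vector equation $v = r^\sharp\times\omega + \lambda\,\omega$ and invoke the double cross-product identity $(\omega\times v)\times\omega = \langle\omega,\omega\rangle v - \langle\omega,v\rangle\omega$. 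The two arguments are algebraically equivalent --- the paper's manipulation of $\alpha$ \emph{is} the BAC--CAB identity written intrinsically --- but your version buys an explicit dictionary to eqs.~(\ref{eq:ray}) and (\ref{eq:pitch}), making transparent that the theorem is literally the Hodge dual of Mozzi's decomposition, which the paper only asserts informally after the proof. The paper's version buys uniformity of language (no cross products, no appeal to a frame or to $\mathbb{R}^3$ identities) and reads as a single chain of identities in $\Omega^\bullet(\spa)$. One genuine point in your favour: you flag the degenerate locus where $dv^\flat = 0$, on which $\star e = 0$ and $r$, $\lambda$ are undefined; the paper's proof divides by $\star e$ without comment, so your proviso restricting to the open set where the vorticity is nonvanishing (or passing to the projective completion) is a correction the paper's own argument also needs.
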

\begin{proof}
We can start by writing the identity:
\begin{equation}
\underbrace{\star (d v^\flat \wedge \star d v^\flat)}_{\star e} v^\flat
=
\underbrace{\star (d v^\flat \wedge \star d v^\flat) v^\flat
-\star(v^\flat \wedge dv^\flat) \star d v^\flat }_{\alpha}+
\underbrace{\star (v^b \wedge d v^\flat)}_{\star h} \star d v^\flat
\end{equation}
\noindent We can then further process the term $\alpha$. Using the identity $\iota_v \alpha=\star (v^\flat \wedge \star \alpha)$ in the following we obtain:
\begin{equation}
    \alpha=\star \iota_v (dv^b \wedge \star d v^\flat)-(\iota_v \star dv^\flat)\star dv^\flat
\end{equation}
\noindent  and considering that the scalar function commute with the Hodge
\begin{equation}
 \alpha=\star(\iota_v(dv^\flat \wedge \star d v^\flat)
 -(\iota_v \star dv^\flat) dv^\flat)
\end{equation}
\noindent and using the distributive property of the $\iota_v(\alpha \wedge \beta)$ we obtain:
\begin{equation}
 \alpha=\star(\iota_v(dv^\flat) \wedge \star dv^b).
\end{equation}
\noindent Finally, dividing the original equation for the scalar $\star e$, and then taking the Hodge $\star$ on both sides of the equation the result is obtained

\end{proof}
\noindent Taking the Hodge and then the sharp of the second component and the sharp of the first component, we have an expression for the infinitesimal twist of eq.(\ref{eq:inftwist}) as:
\begin{equation}
    \begin{pmatrix}
     ({\star dv^\flat})^\sharp \\  v
    \end{pmatrix}
    =
    \begin{pmatrix}
            ({\star dv^\flat})^\sharp  \\   
               (\star(r \wedge \star dv^\flat)))^\sharp 
    \end{pmatrix}
 +
 \begin{pmatrix}
     0 \\ \lambda  (\star dv^\flat)^\sharp
    \end{pmatrix}
\end{equation}
\noindent which can be manipulated recognising being equal to 
\begin{equation}
    \begin{pmatrix}
     \frac{1}{2}({\star dv^\flat})^\sharp \\  v
    \end{pmatrix}
    =
    \begin{pmatrix}
            \frac{1}{2}({\star dv^\flat})^\sharp  \\   
               r^\sharp \times  ({\star dv^\flat})^\sharp 
    \end{pmatrix}
 +
 \begin{pmatrix}
     0 \\ \lambda  (\star dv^\flat)^\sharp.
    \end{pmatrix}
\end{equation}
\noindent This expression gives a construction to identify geometrically exactly the same screw interpretation of a screw for an infinitesimal element. It is therefore possible for a continuum to define a screw field for which we have a pitch field which is equal to the ratio of the density of the helicity and entropy as introduced in eq.(\ref{eq:pitch}) and with reference to eq.(\ref{ray})
a ray vector field as
\[
r^\sharp=
\begin{pmatrix}
 \frac{\iota_vdv^\flat}{\star e}
\end{pmatrix}^\sharp.
\]

\subsection{The Principal Bundle Connection}

It is now possible to define the integral of the various contributions of the continuum, but in order to do so, we first need to map the various contributions to a common reference as explained in sec.\ref{sec:adjoint} via the $Ad_k$ where $k$ is an element of $SE(3)$ which is defined based on the location of the point $p$ where the infinitesimal screw is considered. To clarify the construction, consider to choose an ortho-normal frame bundle ($<e_i, e_j>=\delta_{ij}$) with $e_i(p) \in T_p \spa$ such that $\text{span}\{e_i(p)\}=T_p \spa$. This can be used for example to express:
\[
v(p)=v^i(p) e_i(p) \qquad 
({\star dv^\flat})^\sharp(p)
=(({\star dv^\flat})^\sharp)^j(p) e_j(p)
\]
\noindent where Einstein implicit summation is used. If we define the six dimensional base of the screws (and therefore $\mathfrak{se(3)}$) at the point $p$ with:
\[
s_i(p):=
\begin{pmatrix}
 e_i(p) \\ 0_3(p)
\end{pmatrix} \forall i=1..3 \quad
s_j:=
\begin{pmatrix}
 0_3(p) \\ e_{(j-3)}(p)
\end{pmatrix} \forall j=4..6 \quad
\]
\noindent where with $0_3(p)$ we indicate the null vector of $T_p \spa$, we could see that, due to the fact that screws are isomorphic to elements of $\mathfrak{se(3)}$, $s_1,..s_6$ is actually a basis of $\mathfrak{se(3)}$ which is created based on the choice of the point $p$ and the $e_i(p) \in T_p \spa$. Following this reasoning, if we would choose a different point $q$ with $e_i(q) \in T_q \spa$, and then define in the same way the basis built on the new point and basis vectors, there would be an element of $k_{p\rightarrow q} \in SE(3)$ such that $s_i(q) =Ad_{k_{p \rightarrow q}} s_i(p) \; \forall i=1..6$. This means that if we have a general screw $T \in \mathfrak{se(3)}$ we can express it using a frame attached to either $p$ or $q$ using two different frames in $p$ and $q$ as just introduced, and in such a situation we can associate two numeric representations $T^i(p)$ or $T^i(q)$ such that:
\[
T=T^i(p)s_i(p)=T^j(q)s_j(q)
\]
\noindent 
We can then define for each base $s_i(p)$ the canonical dual base $s^i(p) \in \mathfrak{se^*(3)}$ such that $s^i(p)s_j(p)=\delta^i_j$. Using the canonical dual base it is then possible for any element $T \in \mathfrak{se(3)}$ and any base $s_i(p)$ to calculate the components for which $T=T^i(p)s_i(p)$ as:
\[
T^i(p)=s^i(p)(T).
\]

\subsubsection{The connection}
It is now possible to define the principal bundle connection which can be used to define the horizontal space. Consider the volume form $\mu \in \Omega^3(\spa)$ which can be directly defined using the metric $g$ as indicated previously. Consider a volume $\bar{\spa} \subset \spa$ for which we want to describe the decomposition between rigid body motion and pure deformation of the continuum media. This could be for example the image of a solid via an embedding $c$. If we consider a configuration velocity $v_c \in T_c \conf$, this would induce, as discussed previously, a vectorfield $v=E_c(v_c) \in \Gamma(T\spa)$. We can then rewrite $\frac{1}{2}(\star d v^{\flat})^\sharp$ as a composition of operations as $\frac{1}{2}( g^{-1} \circ \star \circ d \circ g)(v)$ to then define the following $\mathfrak{se(3)}$ valued one form on $\conf$:
\begin{multline}
\label{eq:connection}
    \nu_{\bar{S}}: T \conf \rightarrow \mathfrak{se(3)}: \\
    v_c \mapsto
    \frac{1}{V}
    s_i(q)
    \int_{\bar{S}}
    s^i(q)
    \begin{pmatrix}
         Ad_{k_{p \rightarrow q}}
    \begin{pmatrix}
     \frac{1}{2}( g^{-1} \circ \star \circ d \circ g) \\ \text{Id}
    \end{pmatrix}
    E_c(v_c)(p)
        \end{pmatrix}
    \otimes \mu(p)
\end{multline}
\noindent where we define the total volume:
\[
V:=\int_{\bar{\spa}} \mu
\]
\noindent What eq.(\ref{eq:connection}) actually does, is to take for each point $p$ the vector $v(p)=E_c(v_c)(p)$ and then build the infinitesimal screw 
using the form introduced in eq.(\ref{eq:inftwist}) which is a screw expressed in $p$. The result is than mapped to any common frame in $q$ where the components can be extracted via $s^i(q)$ to be then integrated. In this way six scalar integrals are calculated which represent the integral of the components of the expression of the infinitesimal screws in $q$ expressed with the bases $s_i(q)$. Outside the integral, using the bases elements, the six scalar values are then composed to the result in $\mathfrak{se(3)}$ and normalised by the volume considered. If as an example we would consider a rigid body motion of the continuum, the result of the integral would be the volume times the constant screw representing the rigid body motion in $s_i(q)$ for which the external division for the volume $V$ and multiplication for the base elements would return the abstract screw representing the rigid body motion.

\section{Implications of the presented structure \label{sec:implications}}
The connection $\nu_{\bar{S}}$ allows now to define in an intrinsic way a Horizontal space for a volume $\bar{S}$ as:
\[
\text{Hor}_{_{\bar{S}}}(c):=\{v_c \in T_c \conf \; s.t. \; \nu_{\bar{S}}(v_c)=0 \}
\]
\noindent which allows now to define a unique decomposition of the velocities in configuration space:
\[
T_c \conf = \mathfrak{se(3)} \oplus \text{Hor}_{\bar{S}}(c).
\]
\noindent It interesting to relate the introduced decomposition to an important identity:
\[
\nabla v^\flat =\frac{1}{2} \mathcal{L}_v g - \frac{1}{2} d v^\flat
\]
\noindent which should be interpreted as an equality in the $T^0_2 \spa$ tensor fields. The equality basically says that the covariant differential of the one form representation of the vector field can be decomposed in a symmetric part equal to $\frac{1}{2}L_v g$ which represents what is called the  {\it rate of strain} and an anti-symmetric part $\frac{1}{2} dv^\flat$  which represents half of  the {\it vorticity}. Locally, if the symmetric part is equal to zero, it means that the local motion is not deforming the metric and corresponds to a local rigid body motion. In such a case the all covariant differential will correspond to the vorticity which represents the infinitesimal rotation part of the screw corresponding to the first three components. This expression gives therefore via $\frac{1}{2}L_v g$ a local measure of the rate of strain and it excludes all translations which are happening as a collection of the all continuum. The proposed connection gives instead a way to decompose the global motion. Furthermore, the construction presented for the connection can be modified to use a mass top-form $\bar{\mu}_m:=c_* \mu_m \in \Omega^n(\spa)$ rather than the volume top-form $\mu$. In such a case, the integrated infinitesimal screws representing velocities, would be also multiplied by the density and the volume (the mass form) which would result in the dimension of momenta rather than velocities. This would result in the dual of $\mathfrak{se(3)}$ representing the total screw moment \cite{Stramigioli2001h} belonging to $\mathfrak{se^*(3)}$. This will be the content of a future paper showing the geodesic motion of the rigid part of a continuum.

\subsection{The pure deformation space and elastic energy}
Thanks to the recognition of the principal structure of the configuration space of a continua, it is possible to recognise the base space $\deform$ as the space expressing the pure deformations of the matter. 
Suppose to be in a configuration $c \in \conf$ which corresponds to a deformation $d=\pi(c) \in \deform$ using the projection of the bundle. For an infinitesimal motion $\delta c \in T_c \conf$ it is now possible to uniquely decompose it in an horizontal and vertical components:
\[
\delta c = \delta r + \delta d \quad \delta r \in \text{Ver}(c), \delta d \in \text{Hor}(c)
\]
\noindent in which the vertical space is six dimensional and the horizontal is infinite dimensional. If $\delta d=0$ we have a pure rigid motion and in such a case, the infinitesimal vector field $\delta v=E_c(\delta c) \in T \spa$ is such that $\mathcal{L}_{\delta v} g=0$ everywhere. In all other cases, a component $\delta d \ne 0$ will result in a motion for which $(\mathcal{L}_{\delta v} g)(p) \ne 0$ for some $p \in \spa$. Furthermore, any finite motion starting in $c_1 \in \conf$ and ending in $c_2 \in \conf$ for which the deformation is the same $(d:=\pi(c_1)=\pi(c_2))$, clearly represents configurations in which the internal deformation $d$ is the same because by construction there would be a rigid motion bringing $c_1$ to $c_2$.
This means that a completely coordinate invariant way of expressing potential energy of the deformation can be described by defining an energy function:
\begin{equation}
    H:\deform \rightarrow \mathbb{R} \label{eq:energy}
\end{equation}
\noindent which associates to any internal state $d \in \deform$ of the continua an energy function. Remembering  that $\deform$ is an infinite dimensional manifold and an expression for which $H$ can be specified can only be given via an integral expression whose variational derivative would represent the dual stress field to the deformation represented by the variation. To make this statement more precise, we can use the Lie derivative identity of the pull back:
\[
c^*(\mathcal{L}_vg)=\mathcal{L}_{(c^*v)}c^*g
\]
\noindent which gives an expression of the pullback of the rate of strain field in $\spa$ to the rate of strain field in $\matter$. Considering the non-singularity of the pull back $c^*$ due to the diffeomorphic hypothesis and the positive definiteness of $g$, this implies that $c^*(L_vg)$ is an equivalent description on $\matter$ of the rate of strain $\mathcal{L}_vg$ on $\spa$. Furthermore, a material infinitesimal deformation $\delta d \in T_c \conf$ can be associated uniquely to an infinitesimal vector field $\delta \bar{d}=E_c(\delta d) \in T \spa$ and to an infinitesimal vector field $c^* (\delta \bar{d})  \in T \matter.$ This shows that, given a configuration $c \in \conf$, and a deformation $\delta d$, we can uniquely identify $\mathcal{L}_{c^* \delta \bar{d}} c^* g$. Furthermore,  $\mathcal{L}_{c^* \delta \bar{d}} c^* g$ can be recognised being geometrically part of the tangent space of $\text{sym}_+(T^0_2(\matter))$, the space of positive definite $(0,2)$ tensors of metrics
at $c^*g$. This shows therefore that
\[
\deform \simeq \Gamma(\text{sym}_+(T^0_2(\matter)))
\]
\noindent and a curve in $\deform$ resulting in a $\mathcal{L}_{v} g \ne 0$ will result in a corresponding $\mathcal{L}_{(c^* \delta \bar{d})} g_m \ne 0$ which 
will change the $\text{sym}_+(T^0_2(\matter))$ tensor field $g_m:=c^*g$. This gives a constructive way to define the energy function (\ref{eq:energy}) as:
\begin{equation}
    H:\Gamma(\text{sym}_+(T^0_2(\matter))) \rightarrow \mathbb{R} : \bar{g} \mapsto \int_{\matter} \mathfrak{H}(\bar{g})
\end{equation}
\noindent where for each point $m \in \matter$, $\mathfrak{H}$ associates to $\bar{g}(m)=c^*(g)(m) \in \text{sym}_+(T^0_2(\matter))$ a value $\mathfrak{H}(\bar{g})(m) \in \Omega^n_p(\matter)$.
It is now possible to take the variational derivative of $H$ by considering the following differential expression:
\begin{equation}
H(g+\delta g)-H(g)=<\frac{\delta H}{\delta g} | \delta g> \label{eq:varder} + O^2(\delta g)
\end{equation}
where $<|>$ is a proper dual pairing. In order to define this pairing in a meaningful, geometric way, we can first observe that being $\delta g$ a symmetric $(0,2)$ tensor, if we contract one of its indices with the inverse of the metric $g$ we would get a linear operator:
\[
\delta \bar{g}_i^j=(g^{-1})^{jl}\delta g_{li}.
\]
\noindent Being $\delta \bar{g}$ a liner operator we can do two things: first we can consider the eigen-spaces of $\delta \bar{g}$ which will indicate the principal direction of deformations, and second, we can interpret it as a vector valued one form:
\[
\delta \bar{g}(m) \in T^1_m \matter \otimes \Omega^1_m(\matter) \quad m \in \matter .
\]
\noindent This last consideration gives a very useful expression of the variational derivative of eq.(\ref{eq:varder}):
\begin{equation}
H(g+\delta g)-H(g)=
\int_{\matter} \mathfrak{T} \dot{\wedge} \delta \bar{g} \label{eq:varder1}
\end{equation}
where $\mathfrak{T} \in \Omega^1(\matter) \otimes \Omega^{n-1}(\matter)$ is a one form-valued $(n-1)$ form which can be naturally paired using $\dot{\wedge}$ with $\delta \bar{g}$ and is defined as:
\[
a_1 \otimes a_2 \in
T^1\matter \otimes \Omega^1(\matter) 
, \qquad 
b_1 \otimes b_2 \in
 \Omega^1(\matter) \otimes \Omega^{n-2}(\matter), 
\]
\[
\Rightarrow
(a_1 \otimes a_2) \dot{\wedge} (b_1 \otimes b_2):=b_1(a_1) a_2 \wedge b_2 \in \Omega^n(\matter)
\]
\noindent which can be therefore integrated on the all matter space to give a scalar expression of the infinitesimal energy variation due to the deformation $\delta \bar{g}$. $\mathfrak{T}$ is the stress which is a direct consequence of the definition of the energy function $H$ and is naturally a one form (representing an infinitesimal force) valued $(n-1)$ form (representing and infinitesimal surface), which can associate to any infinitesimal surface the corresponding infinitesimal force.

\section{Conclusions \label{sec:conclusions}}
In this paper it has been shown that the infinite dimensional space of configurations of a continuum has the structure of a principle bundle for which the base space $\deform=\text{sym}_+(T_2^0(\matter))$ of pure deformations is the space of sections of the metric on the matter space on which a potential energy function can be defined intrinsically to describe nonlinear elasticity. 

Furthermore, it has been shown that from the velocity vector field together with its vorticity a $\mathfrak{se(3)}$ valued vector field of screws can be defined which, if properly integrated can be used to define a geometric connection which allows to decompose the vector field in a rigid body motion and a pure deformation. Last but not least the relevance of concepts like helicity and enstropy have been defined and shown during this construction.

This work is an important step in defining a formulation of open and interconnectable nonlinear elasticity which will be needed to create open models of deformable bodies as needed in the ERC project portwings (www.portwings.eu) 

\section*{Acknowledgement}
The author would like to acknowledge the financial support of the European commission ERC Grant agreement ID: 787675 of the project www.Portwings.eu and the fruitful discussion and feedback from Dr. Federico Califano and Dr. Ramy Rashad Hashem.

\bibliographystyle{elsarticle-num-names} 
\bibliography{references.bib}

\end{document}